\begin{document}

\newtheorem{theorem}{Theorem}[section]
\newtheorem{corollary}[theorem]{Corollary}
\newtheorem{definition}[theorem]{Definition}
\newtheorem{lemma}[theorem]{Lemma}
\newtheorem{proposition}[theorem]{Proposition}
\newtheorem{example}[theorem]{Example}
\theoremstyle{remark}
\newtheorem{remark}{Remark}[section]

\newcommand{\ii}{\mathrm{i}}
\newcommand{\re}{\mathrm{Re}\,}
\newcommand{\im}{\mathrm{Im}\,}

\newcommand{\cD}{\mathcal{D}}
\newcommand{\cH}{\mathcal{H}}
\newcommand{\cL}{\mathcal{L}}
\newcommand{\cW}{\mathcal{W}}

\title[$Q$-operator for the hyperbolic Calogero--Moser system]{Baxter $Q$-operator for the hyperbolic Calogero--Moser system}

\author{Martin Halln\"as}
\email{hallnas@chalmers.se}
\address{Department of Mathematical Sciences, Chalmers University of Technology and the University of Gothenburg, SE-412 96 Gothenburg, Sweden}

\thanks{Supported by the Swedish Research Council (Project-id 2018-04291).}

\date{\today}

\begin{abstract}
We introduce a $Q$-operator $\mathcal{Q}_z$ for the hyperbolic Calogero--Moser system as a one-parameter family of explicit integral operators. We establish the standard properties of a $Q$-operator, i.e.~invariance of Hamiltonians, commutativity for different parameter values and that its eigenvalues satisfy an explicitly given first order ordinary difference equation in the parameter $z$.
\end{abstract}

\maketitle

\tableofcontents

\section{Introduction}
Baxter \cite{Bax72,Bax82} first introduced the notion of a $Q$-operator as a technical tool in his study of the eight vertex model, allowing him to deduce Bethe-like equations for its eigenvalues even though a Bethe ansatz for the eigenvectors was lacking. A similar approach was later developed for the periodic Toda chain, first by Gutzwiller \cite{Gut81}, who could handle the $N=2,3$ and $4$-particle cases. After important simplifications by Sklyanin \cite{Skl85}, Pasquier and Gaudin \cite{PG92} proposed a $Q$-operator for the periodic Toda chain, given as an explicit integral operator, and used it to generalise Gutzwiller's Bethe equations for the energy spectrum to all particle numbers $N$. By now, $Q$-operators have been obtained for a number of models.

To be more specific, let us consider a quantum integrable $N$-particle system given by $N$ independent and pairwise commuting partial differential (or difference) operators $H_r$, $r=1,\ldots,N$. A corresponding $Q$-operator $\mathcal{Q}_z$ should depend on a parameter $z$ and its characteristic properties usually include (cf.~Kuznetsov and Sklyanin \cite{KS98})
\begin{enumerate}
\item invariance of Hamiltonians,
$$
\lbrack \mathcal{Q}_z,H_r\rbrack=0;
$$

\item commutativity,
$$
\lbrack \mathcal{Q}_z,\mathcal{Q}_w\rbrack=0;
$$

\item and the fact that its eigenvalues $\phi(z)$ on a joint eigenfunction of $H_r$ and $\mathcal{Q}_z$ satisfy an {\em ordinary} differential- or difference equation
\begin{equation*}
W(z,-\ii\hbar d/dz;\{E_r\})\phi(z) = 0,
\end{equation*}
involving the corresponding eigenvalues $E_r$ of $H_r$.
\end{enumerate}

In this paper, we focus on the $A_{N-1}$ hyperbolic Calogero--Moser system, which describes an arbitrary number of particles $N$ on the line that interact pairwise through the hyperbolic potential
$$
U(x_1,\ldots,x_N) = \sum_{1\leq i<j\leq N}u(x_i-x_j),\ \ \ u(x)\equiv 2g(g-\hbar)\mu^2\big/4\sinh^2(\mu x/2),
$$
where $\hbar\equiv h/2\pi>0$ is the reduced Planck constant, $g>0$ a coupling constant with dimension $\lbrack\text{action}\rbrack$ and $\mu>0$ a parameter with dimension $\lbrack\text{position}\rbrack^{-1}$. A complete set of independent pairwise commuting quantum integrals $H_1,\ldots,H_N$ are given by the following explicit formula:
\begin{multline}
\label{Hr}
H_r\equiv \frac{1}{(N-r)!}\sum_{0\leq s\leq [r/2]}\frac{1}{2^ss!(r-2s)!}\\
\cdot \sum_{\sigma\in S_N}\sigma\big(u(x_1-x_2)\cdots u(x_{2s-1}-x_{2s})\hat{p}_{2s+1}\cdots\hat{p}_r\big),
\end{multline}
with the momentum operators $\hat{p}_i\equiv -\ii\hbar\partial_{x_i}$, $i=1,\ldots,N$; see e.g.~\cite{OP83,Rui99,HR15}. In particular, the Schr\"odinger operator $H\equiv -\hbar^2\Delta+U$ can be obtained as the linear combination $H_1^2-2H_2$.

Introducing the hyperbolic weight function
\begin{equation}
\label{cWN}
\cW_N(g;x)\equiv \prod_{1\leq i<j\leq N}\big[4\sinh^2(\mu(x_i-x_j)/2)\big]^{g/\hbar},
\end{equation}
we recall that the quantum integrals $H_1,\ldots,H_N$ have asymptotically free joint eigenfunctions of the form
\begin{multline}
\label{PsiN}
\Psi_N((p_1,\ldots,p_N),g;(x_1,\ldots,x_N)) = \cW_N(g;(x_1,\ldots,x_N))^{1/2}\\
\cdot F_N((p_1/\hbar\mu,\ldots,p_N/\hbar\mu),g/\hbar;(\mu x_1,\ldots,\mu x_N)),
\end{multline}
with $F_N(u,\lambda;t)$ a function of the $2N+1$ dimensionless quantities
\begin{equation}
\label{ulam}
(u_1,\ldots,u_N)\equiv (p_1/\hbar\mu,\ldots,p_N/\hbar\mu),\ \ \lambda\equiv g/\hbar
\end{equation}
and
\begin{equation}
\label{t}
(t_1,\ldots,t_N)\equiv (\mu x_1,\ldots,\mu x_N)
\end{equation}
that is both analytic and $S_N$-invariant in $t$. (Here and below we take the positive square root of $\cW_N$.) The corresponding eigenvalue of $H_r$ is given by the $r$th symmetric function
$$
S_r(p)\equiv \sum_{1\leq i_1<\cdots<i_r\leq N}p_{i_1}\cdots p_{i_r}
$$
of the momenta $p_1,\ldots,p_N$.

In the case of  $N=2$ variables, we have $F_2(u,\lambda;t)=\exp((u_1+u_2)(t_1+t_2)/2)F(u_1-u_2,\lambda;t_1-t_2)$, where $F$ is essentially equal to the conical (or Mehler) function specialisation of the Gauss hypergeometric function ${}_2F_1$; see e.g.~Chapter 14 in \cite{Dig10}. In the general-$N$ case, the analog of $F$ was identified for the parameter values $\lambda=d/2$ with $1,2,4$ by Olshanetsky and Perelomov \cite{OP83} as the spherical function on $SL_n(\mathbb{F})/SU_n(\mathbb{F})$ with $\mathbb{F}=\mathbb{R},\mathbb{C},\mathbb{H}$. The appropriate generalisation of $F$ to arbitrary $\lambda>0$ (and $N\geq 3$) is the Heckman--Opdam hypergeometric function associated with the root system $A_{N-1}$, first constructed and studied by Heckman and Opdam \cite{HO87} in the context of an arbitrary root system. A corresponding generalised Fourier transform was introduced and developed by Opdam \cite{Opd95} as well as Cherednik \cite{Che97}.

As a first result, we show that the joint eigenfunctions $\Psi_N$ \eqref{PsiN} satisfy a one-parameter family of integral equations, with integration kernels
\begin{equation}
\label{Qz}
Q_z(g;x,y)\equiv \exp\left(\ii \frac{z}{\hbar}\sum_{i=1}^N(x_i-y_i)\right)
\frac{\cW_N(g;x)^{1/2}\cW_N(g;y)^{1/2}}{\prod_{i,j=1}^N \big[2\cosh(\mu(x_i-y_j)/2)\big]^{g/\hbar}}
\end{equation}
and eigenvalues that are explicit as well. More precisely, choosing the Weyl chamber
$$
G_N\equiv \left\{x\in\mathbb{R}^N\mid x_1<x_2<\cdots<x_N\right\},
$$
we prove the following theorem.

\begin{theorem}
\label{Thm:IntEq}
Letting $z\in\mathbb{R}$ and $(p,g,x)\in \mathbb{R}^N\times [\hbar,\infty)\times \mathbb{R}^N$, we have
\begin{equation}
\label{IntEq}
\int_{G_N}Q_z(g;x,y)\Psi_N(p,g;y)dy = \phi_z(p,g)\Psi_N(p,g;x),
\end{equation}
with eigenvalue
\begin{equation}
\label{phi}
\phi_z(p,g) = \prod_{i=1}^N \frac{\Gamma\big(g/2\hbar+\ii (p_i-z)/\hbar\mu\big)\Gamma\big(g/2\hbar-\ii (p_i-z)/\hbar\mu\big)}{\mu\Gamma(g/\hbar)}.
\end{equation}
\end{theorem}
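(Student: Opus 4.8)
The plan is to establish \eqref{IntEq}--\eqref{phi} by induction on the particle number $N$, the base case $N=1$ being a classical Fourier integral and the inductive step resting on a recursive integral representation of the eigenfunctions $\Psi_N$.

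For $N=1$ one has $\cW_1\equiv 1$, $G_1=\mathbb{R}$ and $\Psi_1(p,g;y)=e^{\ii py/\hbar}$, so the substitution $y=x-t$ turns the left-hand side of \eqref{IntEq} into $e^{\ii px/\hbar}\int_{\mathbb{R}}e^{\ii(z-p)t/\hbar}\big[2\cosh(\mu t/2)\big]^{-g/\hbar}\,dt$. After rescaling $t$ this is the standard beta-type evaluation $\int_{\mathbb{R}}e^{\ii\rho s}\big[2\cosh(s/2)\big]^{-2a}\,ds=\Gamma(a+\ii\rho)\Gamma(a-\ii\rho)/\Gamma(2a)$, seen at once from the change of variables $v=e^{-s}$ that reduces it to Euler's beta integral, with $a=g/2\hbar$ and $\rho=(z-p)/\hbar\mu$; this reproduces exactly the single factor in \eqref{phi}.

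For the inductive step I would invoke an integral representation of the form $\Psi_N(p,g;x)=\int_{G_{N-1}}\mathcal{K}_N(p_N,g;x,\xi)\,\Psi_{N-1}(p',g;\xi)\,d\xi$, with $p'=(p_1,\dots,p_{N-1})$ and $\mathcal{K}_N$ an explicit kernel of Gindikin--Karpelevich (one ``Gelfand--Tsetlin row'') type, built from the $\cW$-weights, the $\cosh$-factors between the $N$ variables $x$ and the $N-1$ variables $\xi$, and an exponential in $p_N$ --- such a recursion being available from the literature on hyperbolic Calogero--Moser eigenfunctions. Substituting it into \eqref{IntEq}, interchanging the order of the $y$- and $\xi$-integrations, and applying the inductive hypothesis for $N-1$, the claim for $N$ reduces to a single kernel commutation relation
\[
\int_{G_N}Q_z(g;x,y)\,\mathcal{K}_N(p_N,g;y,\xi)\,dy=\gamma(p_N,z)\int_{G_{N-1}}\mathcal{K}_N(p_N,g;x,\eta)\,Q_z(g;\eta,\xi)\,d\eta,
\]
with $\gamma(p,z)\equiv\frac{\Gamma\big(g/2\hbar+\ii(p-z)/\hbar\mu\big)\Gamma\big(g/2\hbar-\ii(p-z)/\hbar\mu\big)}{\mu\Gamma(g/\hbar)}$ (here $\eta,\xi\in\mathbb{R}^{N-1}$ and $Q_z$ denotes the $(N-1)$-variable version of \eqref{Qz}). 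Iterating from $N$ down to $1$ then gives $\phi_z(p,g)=\prod_{i=1}^N\gamma(p_i,z)$, which is \eqref{phi}. Alternatively, one could bypass the recursion and argue that the integral operator with kernel $Q_z$ intertwines each $H_r$ (acting on $x$ and on $y$), using a kernel identity $\big(H_r^{(x)}-H_r^{(y)}\big)Q_z(g;x,y)=0$ together with the vanishing of $\cW_N$ on $\partial G_N$ for $g>0$ to discard boundary terms; combined with the Heckman--Opdam description of the joint $H_r$-eigenspace this forces $\int_{G_N}Q_z(g;x,y)\Psi_N(p,g;y)\,dy$ to be a scalar multiple of $\Psi_N(p,g;x)$, and the scalar is extracted (and the remaining Weyl-group terms are shown to drop out) by an asymptotic analysis deep inside $G_N$ that again factorizes into $N$ copies of the $N=1$ integral.

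The main obstacle is the displayed kernel commutation relation: it is a genuine $N$-dimensional integral identity, and although the product structure of $Q_z$ and $\mathcal{K}_N$ in the pairwise differences of variables should allow one to peel it down to one-dimensional integrals of the $N=1$ beta-integral (or Barnes) type, carrying this out requires care. Comparable care is needed for the analytic prerequisites that legitimise the manipulations above --- absolute convergence of the integral in \eqref{IntEq} and of the iterated integrals, and the applicability of Fubini's theorem --- which is presumably where the standing hypotheses $g\in[\hbar,\infty)$ and the restriction to the chamber $G_N$ enter, via known growth bounds for the Heckman--Opdam function $F_N$ and elementary estimates on the $\cosh$-kernel (whose exponential decay, together with the vanishing of $\cW_N^{1/2}$ at the walls, should dominate the growth of $\Psi_N=\cW_N^{1/2}F_N$).
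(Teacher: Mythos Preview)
Your ``alternative'' route---kernel identity for $Q_z$, Heckman--Opdam uniqueness for symmetric analytic joint eigenfunctions, then extraction of the scalar by computing the asymptotics deep in $G_N$ and reducing to $N$ copies of the one-variable beta integral---is precisely the paper's proof. One correction: the kernel identity actually reads $\big(H_r(x)-H_r(-y)\big)Q_z(g;x,y)=0$, with a sign flip in $y$; this is what makes the integration by parts (no complex conjugation) land on $H_r(y)\Psi_N$ rather than $H_r(-y)\Psi_N$. The paper also needs $g\geq\hbar$ not merely for convergence but to ensure that $\cW_N^{1/2}H_r(-y)\cW_N^{1/2}$ is continuous across the walls, so that the boundary terms genuinely vanish; this is handled by an analytic-continuation argument in $\lambda=g/\hbar$.

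Your primary inductive strategy, via the recursive integral representation of $\Psi_N$ and a commutation relation between $Q_z$ and the step kernel $\mathcal{K}_N$, is a genuinely different approach. It is essentially the method Belousov--Derkachov--Kharchev--Khoroshkin use in the relativistic setting, and the paper explicitly contrasts it with its own. The trade-off is the one you identify: the paper's approach avoids proving the $N$-dimensional kernel commutation identity (which is where all the work sits in the inductive scheme) at the cost of invoking the Heckman--Opdam uniqueness theorem and the asymptotic expansion of $F_N$. Conversely, the inductive method, once the commutation identity is in hand, is more self-contained and does not depend on root-system-theoretic input; but as you acknowledge, you have not proved that identity here, so as it stands your primary argument is a plan rather than a proof.
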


From the account in Sections 4.1.1--2 of \cite{HR12}, it is readily infered that $Q_z$ satisfies the so-called kernel identities
\begin{equation}
\label{QzIds}
\big(H_r(x)-H_r(-y)\big)Q_z(x,y) = 0,\ \ \ r = 1,\ldots,N.
\end{equation}
For the Schr\"odinger operator $H\equiv-\hbar^2\Delta+U$, such an identity was first established by Langmann \cite{Lan00} in the more general elliptic case. The generalisation to all (symmetric elliptic) quantum integrals was later obtained by Ruijsenaars \cite{Rui06}. Using the identities \eqref{QzIds} as well as the manifest formal self-adjointness of $H_r$, we show that the left-hand side of \eqref{IntEq} is a joint eigenfunction of the quantum integrals $H_r$ with the same eigenvalue as $\Psi_N$. Combining this result with a suitable uniqueness result for regular joint eigenfunctions, due to Heckman--Opdam in the (radial) gauge corresponding to $F_N$, we thus arrive at an integral equation of the form \eqref{IntEq}. The explicit expression \eqref{phi} for the eigenvalue is then deduced by computing the dominant asymptotic behaviour of the integral in \eqref{IntEq} deep inside the Weyl chamber $G_N$. We suspect that all $g\geq 0$ could be allowed in Thm.~\ref{Thm:IntEq}, but, just as in \cite{HR15}, we need the stronger assumption $g\geq\hbar$ in order to control all aspects of its proof.

For $z\in\mathbb{R}$, we define $\mathcal{Q}_z$ as a one-parameter family of integral operators on $L^2(G_N)$ by
\begin{equation}
\label{cQz}
(\mathcal{Q}_zf)(x)\equiv \int_{G_N}Q_z(g;x,y)f(y)dy.
\end{equation}
By combining Thm.~\ref{Thm:IntEq} with the pertinent generalised Fourier (or eigenfunction) transform, we readily infer that it has all three of the above Properties (1)--(3) of a $Q$-operator for the hyperbolic Calogero--Moser system. This constitutes our main result and its precise formulation now follows.

\begin{theorem}
\label{Thm:QOp}
Assuming $g\in [\hbar,\infty)$ and $z\in\mathbb{R}$, the operator $\mathcal{Q}_z$, as defined by \eqref{cQz}, is bounded and self-adjoint and satisfies the commutation relations
\begin{equation}
\label{QzHrCom}
\lbrack \mathcal{Q}_z,H_r\rbrack=0,\ \ \ r = 1,\ldots,N,
\end{equation}
and
\begin{equation}
\label{QzQwCom}
\lbrack \mathcal{Q}_z,\mathcal{Q}_w\rbrack=0,\ \ \ z,w\in\mathbb{R}.
\end{equation}
Furthermore, introducing the generating function
$$
E(\gamma;p)\equiv \prod_{i=1}^N(\gamma+p_i) = \sum_{r=0}^N \gamma^{N-r}S_r(p),
$$
we have the difference equation
\begin{equation}
\label{phizEq}
E(\ii\mu(\hbar-g/2)-z;p)\phi_{z-i\hbar\mu}(p) = (-1)^NE(\ii\mu g/2-z;p)\phi_z(p).
\end{equation}
\end{theorem}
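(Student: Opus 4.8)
The plan is to transport everything to the spectral (``momentum'') side via the generalised Fourier transform $\mathcal{F}$ attached to the Heckman--Opdam hypergeometric functions of type $A_{N-1}$ (Opdam \cite{Opd95}, Cherednik \cite{Che97}), which turns the $H_r$ and $\mathcal{Q}_z$ into multiplication operators; the two commutation relations then reduce to the triviality that multiplication operators commute, while the difference equation \eqref{phizEq} is obtained by a bare-hands manipulation of the closed formula \eqref{phi}. I would begin by recalling that, for a suitable Plancherel measure $d\nu$ on $\overline{G_N}$, the map $(\mathcal{F}f)(p)=\int_{G_N}\overline{\Psi_N(p,g;x)}f(x)\,dx$ extends to a unitary isomorphism $L^2(G_N,dx)\to L^2(\overline{G_N},d\nu)$ with inverse $(\mathcal{F}^{-1}g)(x)=\int_{\overline{G_N}}\Psi_N(p,g;x)g(p)\,d\nu(p)$, under which $H_r=\mathcal{F}^{-1}M_{S_r}\mathcal{F}$, $M_{S_r}$ being multiplication by the symmetric function $S_r(p)$. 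Writing $f=\mathcal{F}^{-1}\mathcal{F}f$ and applying $\mathcal{Q}_z$ under the $p$-integral, Thm.~\ref{Thm:IntEq} gives $(\mathcal{Q}_zf)(x)=\int_{\overline{G_N}}\phi_z(p)\,\Psi_N(p,g;x)\,(\mathcal{F}f)(p)\,d\nu(p)$, i.e.
\[
\mathcal{Q}_z=\mathcal{F}^{-1}M_{\phi_z}\mathcal{F},
\]
with $M_{\phi_z}$ multiplication by $\phi_z$.

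Next I would observe that \eqref{phi} together with $\overline{\Gamma(w)}=\Gamma(\bar w)$ shows that each factor there equals $|\Gamma(g/2\hbar+\ii(p_i-z)/\hbar\mu)|^2/\mu\Gamma(g/\hbar)$, so $\phi_z$ is real and strictly positive, and the elementary bound $|\Gamma(a+\ii b)|\le\Gamma(a)$ ($a>0$) gives $0<\phi_z(p)\le\bigl(\Gamma(g/2\hbar)^2/\mu\Gamma(g/\hbar)\bigr)^N$ on $\mathbb{R}^N$. Hence $M_{\phi_z}$, and with it $\mathcal{Q}_z$, is bounded, with norm equal to that supremum, and self-adjoint; self-adjointness is in any case manifest from the Hermiticity $\overline{Q_z(g;x,y)}=Q_z(g;y,x)$ of the kernel \eqref{Qz} (only the factor $\exp(\ii(z/\hbar)\sum_i(x_i-y_i))$ is non-real, and it is inverted by $x\leftrightarrow y$, while the remaining factor is real, positive and symmetric). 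Since $M_{\phi_z}$ commutes with $M_{S_r}$ and with $M_{\phi_w}$, conjugating by the unitary $\mathcal{F}$ yields \eqref{QzHrCom} and \eqref{QzQwCom}; \eqref{QzHrCom} is to be read as $\mathcal{Q}_z$ commuting with the (bounded) functional calculus of the self-adjoint operators $H_r$.

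For the difference equation I would compute directly from \eqref{phi}. Put $\beta=g/2\hbar$ and $a_i=\ii(p_i-z)/\hbar\mu$; replacing $z$ by $z-\ii\hbar\mu$ turns the arguments $\beta\pm a_i$ of the two Gamma factors into $\beta+a_i-1$ and $\beta-a_i+1$, so $\Gamma(w+1)=w\Gamma(w)$ gives
\[
\prod_{i=1}^N(\beta+a_i-1)\cdot\phi_{z-\ii\hbar\mu}(p)=\prod_{i=1}^N(\beta-a_i)\cdot\phi_z(p).
\]
Since $\beta+a_i-1=\tfrac{\ii}{\hbar\mu}\bigl(p_i+\ii\mu(\hbar-g/2)-z\bigr)$ and $\beta-a_i=-\tfrac{\ii}{\hbar\mu}\bigl(p_i+\ii\mu g/2-z\bigr)$, the two products equal $(\ii/\hbar\mu)^N E(\ii\mu(\hbar-g/2)-z;p)$ and $(-1)^N(\ii/\hbar\mu)^N E(\ii\mu g/2-z;p)$; cancelling the common factor $(\ii/\hbar\mu)^N$ leaves precisely \eqref{phizEq}. (Conjugating by $\mathcal{F}$, this is the scalar form of the Baxter relation $\sum_{r=0}^N\bigl(\ii\mu(\hbar-g/2)-z\bigr)^{N-r}H_r\,\mathcal{Q}_{z-\ii\hbar\mu}=(-1)^N\sum_{r=0}^N\bigl(\ii\mu g/2-z\bigr)^{N-r}H_r\,\mathcal{Q}_z$ with $H_0\equiv I$, but only the scalar statement is claimed and it needs no continuation of $\mathcal{Q}_z$ in $z$.)

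The one step requiring genuine care is the identity $\mathcal{Q}_z=\mathcal{F}^{-1}M_{\phi_z}\mathcal{F}$, i.e.\ the interchange of $\int_{G_N}Q_z(g;x,y)\,(\cdot)\,dy$ with $\int_{\overline{G_N}}(\cdot)\,d\nu(p)$: one must exhibit a dense class of $f$ for which \eqref{cQz} converges absolutely and the relevant triple integral is absolutely convergent, which rests on balancing the exponential decay of the $\cosh$-denominator in \eqref{Qz} against the growth of $\cW_N^{1/2}$ and the asymptotics of $\Psi_N$; the extension to all of $L^2(G_N)$ is then automatic since the right-hand side is bounded. Everything else is a citation to the $A_{N-1}$ Heckman--Opdam/Opdam--Cherednik theory, the one-line Gamma estimate, or the elementary computation above.
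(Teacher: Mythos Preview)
Your proposal is correct and follows essentially the same route as the paper: transport $\mathcal{Q}_z$ to the spectral side via the $A_{N-1}$ Heckman--Opdam Fourier transform, identify it with multiplication by the real bounded function $\phi_z$ (via Thm.~\ref{Thm:IntEq} and the Gamma estimates $\overline{\Gamma(w)}=\Gamma(\bar w)$, $|\Gamma(s+it)|\le\Gamma(s)$), read off boundedness, self-adjointness and both commutation relations, and verify \eqref{phizEq} by a direct application of $\Gamma(w+1)=w\Gamma(w)$. The only noteworthy difference is in the Fubini step you flag: the paper sidesteps the absolute-convergence issue by starting with $f\in C_0^\infty(G_N)$ on the \emph{momentum} side and integrating \eqref{IntEq} against $f(p)\,dp$, so the $p$-integral is compactly supported and the interchange with $\int_{G_N}dy$ is immediate from the $y$-integrability already established in Section~\ref{Sec:IntEq}; your formulation (starting from position-side $f$ and writing $f=\mathcal{F}^{-1}\mathcal{F}f$) would require the additional work you allude to.
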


In the $N=2$ case, the results in Thms.~\ref{Thm:IntEq}--\ref{Thm:QOp} (except the difference equation \eqref{phizEq} in the latter), as well as the main ideas behind their proofs, can all be extracted from the paper \cite{HR18}. See also the recent preprint by Belousov, Derkachov, Kharchev and Khoroshkin \cite{BDKK23c} for a detailed account of $Q$-operators for the $2$-particle hyperbolic Calogero--Moser system.

While the $2$-variable case can be handled using results on classical conical functions, our arguments in this paper rely on the relatively recent theory of Heckman--Opdam hypergeometric functions. On the other hand, in \cite{HR15} we were able to obtain analogous results also for the $N=2$ hyperbolic relativistic Calogero--Moser (or Ruijsenaars--Schneider) system. We believe that the approach presented in the present paper can be used to generalize these results on the relativistic case to all particle numbers $N>3$, but, at the time of writing, a suitable uniqueness result for the pertinent joint eigenfunctions is missing.

In a remarkable paper, Belousov et al.~\cite{BDKK23a} recently obtained a $Q$-operator for the arbitrary-$N$ hyperbolic relativistic Calogero--Moser system; and, in the follow-up paper \cite{BDKK23b}, they established an integral equation analogous to \eqref{IntEq} for the joint eigenfunctions of the corresponding quantum integrals constructed via an explicit recursive scheme by the author and Ruijsenaars in \cite{HR14}.

The results of Belousov et al.~are obtained using methods that are distinctly different from the ones used in this paper. Specifically, their proof of commutativity of $Q$-operators (i.e.~Property (2)) proceeds in a more direct manner using a hypergeometric identity generalising \eqref{QzIds}; and the integral equation in question is deduced from the explicit recursive construction of the joint eigenfunctions.

We believe it would be interesting to further develop and compare the different methods indicated above. In particular, while Belousov et al.~already have obtained striking results at the relativistic level, the approach used here does not rely on the recursive construction of joint eigenfunctions, which is available only in the $A_{N-1}$-case, and could thus potentially be used to obtain interesting results on Calogero--Moser systems associated with other root systems.

The plan of the paper is as follows. In Section \ref{Sec:JEigFuncs}, we prepare the ground for our proofs of Thms.~\ref{Thm:IntEq}--\ref{Thm:QOp}. Specifically, we recall the connection between the quantum integrals $H_1,\ldots,H_N$ and the $A_{N-1}$-instance of the Heckman--Opdam hypergeometric system of PDEs as well as the asymptotic behaviour of the corresponding hypergeometric function. Section \ref{Sec:IntEq} contains the proof of Thm.~\ref{Thm:IntEq} and Section \ref{Sec:QOp} is devoted to the proof of Thm.~\ref{Thm:QOp}.

\section{Joint eigenfunctions}
\label{Sec:JEigFuncs}
In this section, we briefly review properties of the joint eigenfunctions $\Psi_N$ \eqref{PsiN} that we rely on in our proofs of Thms.~\ref{Thm:IntEq}--\ref{Thm:QOp}. It will be convenient to work in the (radial) $F_N$-gauge and with the dimensionless quantities \eqref{ulam}--\eqref{t}.

Therefore, we consider the PDOs
\begin{equation}
\label{Dr}
D_r(\lambda;t)\equiv (\hbar\mu)^{-r}\left(\cW_N^{-1/2}H_r\cW_N^{1/2}\right)\left(\hbar\lambda;\mu^{-1}t\right),\ \ \ r = 1,\ldots,N,
\end{equation}
which generate a commutative algebra of algebraically independent PDOs, containing, in particular, the second order PDO
\begin{equation}
\begin{split}
L_2 &\equiv \Delta+2\lambda\sum_{1\leq i<j\leq N}\coth\frac{t_i-t_j}{2}\left(\frac{\partial}{\partial t_i}-\frac{\partial}{\partial t_j}\right)\\
&= 2D_2-D_1^2-(\rho,\rho),
\end{split}
\end{equation}
where $(\cdot,\cdot)$ denotes the standard bilinear form on $\mathbb{C}^N$ and $\rho=\rho(\lambda)$ the Weyl vector for `multiplicity' $\lambda$, given by
$$
\rho = \frac{\lambda}{2}(N-1,N-3,\ldots,-N+3,-N+1).
$$

The system of PDEs
\begin{equation}
\label{hyperGSys}
D_r(t)F_N = S_r(u)F_N,\ \ \ r = 1,\ldots,N,
\end{equation}
essentially amounts to the Heckman--Opdam hypergeometric system associated with a root system of type $A_{N-1}$, multiplicity parameter $\lambda\geq 0$ and spectral parameter $u\in\mathbb{C}^N$.

We note that the orthogonal projection of $v\in\mathbb{R}^N$ onto the hyperplane $v_1+\cdots+v_N=0$ in $\mathbb{R}^N$ equals
$$
\pi(v)\equiv v-\frac{1}{N}(v,\underline{1})\underline{1},\ \ \ \underline{1}\equiv (1,\ldots,1).
$$
Up to a constant multiple, the hypergeometric system \eqref{hyperGSys} has a unique solution of the form
\begin{equation}
\label{FNF}
F_N(u,\lambda;t) = \exp\left(\frac{\ii}{N}(u,\underline{1})(t,\underline{1})\right)F(\ii\pi(u),\lambda;\pi(t))
\end{equation}
that is symmetric and analytic in $t$ on a suitable neighbourhood of the origin. (Here and below we extend $\pi$ by linearity to all of $\mathbb{C}^N$.) If we impose the normalisation condition $F(u,\lambda;0)=1$, the function $F$ is precisely Heckman and Opdam's hypergeometric function of type $A_{N-1}$ and, following the terminology of Brennecken and R\"osler \cite{BR23}, we refer to the corresponding function $F_N$ as the {\em extended} $A_{N-1}$ Heckman--Opdam hypergeometric function.

We note that, although the original uniqueness result of Heckman and Opdam \cite{HO87} (see also \cite{HO21}) on symmetric analytic solutions of hypergeometric systems concerns $F$, the above extension to $F_N$ follows as a straightforward corollary; see e.g.~Section 7 in \cite{HR15} and Section 2 in \cite{BR23}.

Taking $u\in\mathbb{R}^N$, we proceed to record the dominant asymptotics of the extended hypergeometric function $F_N(u,x)$ for
$$
m_N(x)\equiv \max_{i=1,\ldots,N-1}(x_i-x_{i+1})\to-\infty
$$
as well as a bound on the remainder, which exhibits its exponential decay; both of which are easily inferred from the asymptotic expansion of $F(\ii u,x)$ in Weyl chambers, as established by Heckman and Opdam \cite{HO87} (see also \cite{Opd95}). More precisely, introducing the dominant asymptotics function
\begin{equation}
\label{Fas}
F_N^{\mathrm{as}}(u,\lambda;x) := \sum_{\sigma\in S_N}c(-\sigma\ii u,\lambda)\exp((\sigma\ii u+\rho,x)),
\end{equation}
with the generalised Harish-Chandra $c$-function given by
$$
c(v,\lambda) = \frac{\tilde{c}(v,\lambda)}{\tilde{c}(\rho,\lambda)},\ \ \ \tilde{c}(v,\lambda) = \prod_{1\leq i<j\leq N}\frac{\Gamma(v_i-v_j)}{\Gamma(v_i-v_j+\lambda)},
$$
we have the following result.

\begin{proposition}
\label{Prop:FAs}
Let $\delta>0$ and assume that $\lambda\geq 0$ and that $u\in\mathbb{R}^N$ is regular, in the sense that $u_i-u_j\neq 0$ for all $1\leq i<j\leq N$. Then there exists a constant $C_\delta>0$ such that
\begin{equation}
\label{FBd}
|(F_N-F_N^{\mathrm{as}})(u,\lambda;x)| < C_\delta\exp\big((\rho,x)+m_N(x)\big)
\end{equation}
for all $x\in G_N$ with $m_N(x)<-\delta$.
\end{proposition}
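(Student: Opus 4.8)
The plan is to reduce the bound on $F_N - F_N^{\mathrm{as}}$ to the corresponding statement for the non-extended Heckman--Opdam hypergeometric function $F$, which is available in the literature. Concretely, I would first use the factorisation \eqref{FNF}, writing $F_N(u,\lambda;x) = \exp\big(\tfrac{\ii}{N}(u,\underline 1)(x,\underline 1)\big) F(\ii\pi(u),\lambda;\pi(x))$, and observe that the prefactor has modulus one for real $u$ and real $x$. Hence $|F_N(u,\lambda;x)| = |F(\ii\pi(u),\lambda;\pi(x))|$, and similarly the asymptotics function \eqref{Fas} factors: since $(\sigma\ii u + \rho, x) = \tfrac{\ii}{N}(u,\underline 1)(x,\underline 1) + (\sigma\ii u + \rho - \tfrac{\ii}{N}(u,\underline 1)\underline 1, \pi(x))$ once one checks that $\rho$ and $\sigma\ii u - \tfrac{\ii}{N}(u,\underline 1)\underline 1 = \sigma\ii\pi(u) - \tfrac{\ii}{N}(u,\underline 1)\underline 1 + \tfrac{\ii}{N}(u,\underline 1)\underline 1$ lie, up to the common $\underline 1$-component, in the hyperplane; the point is that $\pi(\sigma u) = \sigma\pi(u)$ since $S_N$ commutes with $\pi$. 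The net effect is that $F_N - F_N^{\mathrm{as}}$ equals the same unimodular prefactor times $(F^{\mathrm{HO}} - F^{\mathrm{HO,as}})(\ii\pi(u),\lambda;\pi(x))$, where $F^{\mathrm{HO,as}}$ denotes the Harish-Chandra series leading term.

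Next I would invoke the Heckman--Opdam asymptotic expansion of $F$ in a Weyl chamber (as in \cite{HO87,Opd95}): for regular spectral parameter there is an expansion $F^{\mathrm{HO}}(\ii v;y) = \sum_{\sigma\in S_N} c(-\sigma\ii v)\, \Phi(\sigma\ii v; y)$ in terms of Harish-Chandra series $\Phi(\mu; y) = e^{(\mu+\rho,y)}\sum_{\kappa\geq 0}\Gamma_\kappa(\mu) e^{-(\kappa,y)}$, the sum being over the positive root lattice, with $\Gamma_0 = 1$. Subtracting the leading ($\kappa=0$) terms leaves a series whose every exponential carries at least one simple root, hence is bounded by $C e^{(\rho,y)} e^{-\alpha_{\min}(y)}$ where $\alpha_{\min}$ picks out the least-negative simple-root value; translating back through $\pi$ this is exactly $C_\delta \exp\big((\rho,x) + m_N(x)\big)$, since for $A_{N-1}$ the simple roots are $e_i - e_{i+1}$ and $m_N(x) = \max_i (x_i - x_{i+1}) = \max_i \alpha_i(\pi(x))$. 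The regularity hypothesis $u_i \neq u_j$ is precisely what guarantees the $c$-function coefficients and the series $\Gamma_\kappa(\sigma\ii v)$ are finite and the expansion is valid; the restriction $m_N(x) < -\delta$ keeps one uniformly inside the chamber, away from the walls, so the geometric-type majorisation of the tail converges uniformly with a $\delta$-dependent constant.

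The main technical obstacle is making the uniformity in $x$ (for fixed regular $u$) explicit: one must check that the remainder series $\sum_{\kappa\neq 0}\Gamma_\kappa(\sigma\ii v) e^{-(\kappa,\pi(x))}$ is dominated, uniformly for $m_N(x)<-\delta$, by a constant multiple of its first-order term $e^{-m_N(x)}$ (equivalently $e^{-\max_i\alpha_i(\pi(x))}$). This follows from the standard estimates on the coefficients $\Gamma_\kappa$ — they grow at most polynomially, or more precisely are controlled so that $\sum_\kappa |\Gamma_\kappa(\mu)| e^{-(\kappa,y)}$ converges for $y$ in a chamber bounded away from the walls — but one needs to phrase it so that the bound on $F - F^{\mathrm{as}}$ comes out with the single exponential factor $\exp((\rho,x)+m_N(x))$ rather than a sum over roots. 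I would handle this by noting that on the region $m_N(x) < -\delta$ each $e^{-(\kappa,\pi(x))}$ with $\kappa = \sum_i n_i\alpha_i$, $n_i\geq 0$ not all zero, satisfies $e^{-(\kappa,\pi(x))} \leq e^{-m_N(x)} \cdot e^{-(\kappa - \alpha_{i_0},\pi(x))}$ for a suitable $i_0$ with $n_{i_0}\geq 1$, and $e^{-(\kappa-\alpha_{i_0},\pi(x))} \leq e^{\text{const}\,\delta}$-type bounds are not available directly, so instead one absorbs the full convergent tail $\sum_{\kappa\neq 0}|\Gamma_\kappa|e^{-(\kappa,\pi(x))}$ into $C_\delta e^{-m_N(x)}$ using that, deep in the chamber, the whole tail is comparable to its dominant term. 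Finally I would remark that all these facts are recorded (in the non-extended setting) in \cite{HO87,Opd95} and that the passage to $F_N$ is the elementary unimodular-prefactor argument above, as already used in \cite{HR15,BR23}.
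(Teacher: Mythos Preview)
Your overall strategy---invoke the Harish--Chandra series expansion of the Heckman--Opdam hypergeometric function, identify $F_N^{\mathrm{as}}$ as the $\kappa=0$ term, and bound the tail---is precisely the paper's approach. Two remarks are in order.

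First, the detour through the factorisation \eqref{FNF} and the non-extended function $F$ is correct but unnecessary: the paper simply quotes the expansion directly for $F_N$ (specialising Thm.~6.3 in \cite{Opd95}), writing
\[
F_N(u,\lambda;x)=\sum_{\sigma\in S_N}c(-\ii\sigma u)\,\phi(\ii\sigma u+\rho;x),\qquad
\phi(\xi+\rho;x)=e^{(\xi+\rho,x)}\sum_{\chi\in Q_A^+}\Delta_\chi(\xi)e^{(\chi,x)}.
\]
(Note also that for $x\in G_N$ the decaying factors are $e^{+(\chi,x)}$, since $(\alpha_i,x)=x_i-x_{i+1}<0$; your convention $e^{-(\kappa,\pi(x))}$ has the sign reversed and would blow up on $G_N$.)

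Second, and more substantively, the step you correctly flag as the main obstacle---extracting a single factor $e^{m_N(x)}$ from the tail $\sum_{\chi\neq 0}\Delta_\chi e^{(\chi,x)}$ with a constant that depends only on $\delta$---is left unresolved in your proposal (``the whole tail is comparable to its dominant term'' is not an argument). The paper handles this cleanly using the specific form of the coefficient bound in \cite{Opd95}: for each $x_0\in -G_N$ there is $K_{x_0}>0$ with $|\Delta_\chi(\xi)|\leq K_{x_0}e^{(\chi,x_0)}$. One then \emph{chooses} $x_0\in -G_N$ with $0<m_N(x_0)<\delta/2$, so that for $m_N(x)<-\delta$ one has $m_N(x+x_0)\leq m_N(x)+m_N(x_0)<-\delta/2$, and hence
\[
\sum_{\chi\neq 0}|\Delta_\chi|e^{(\chi,x)}\leq K_{x_0}\sum_{\chi\neq 0}e^{(\chi,x+x_0)}
\leq K_{x_0}\sum_{n\geq 1}\#\{\chi:|\chi|=n\}\,e^{n\,m_N(x+x_0)}
\leq C_\delta\,e^{m_N(x)},
\]
the last step being a geometric-type sum uniformly convergent once $m_N(x+x_0)<-\delta/2$. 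This ``shift by $x_0$'' device is exactly the missing idea in your sketch; neither your peeling-off-a-simple-root argument nor a bare appeal to polynomial coefficient growth gives it without further work.
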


\begin{proof}[Sketch of proof]
Specialising Thm.~6.3 in \cite{Opd95} to a root system of type $A_{N-1}$, we infer the following information on the asymptotic behaviour of $F_N$ in the Weyl chamber $G_N$:
$$
F_N(u,\lambda;x) = \sum_{\sigma\in S_N}c(-\ii\sigma u,k)\phi(\ii\sigma u+\rho,\lambda;x),
$$
where the generalized Harish-Chandra series $\phi$ is of the form
$$
\phi(\xi+\rho,\lambda;x) = \exp((\xi+\rho,x))\sum_{\chi\in Q_A^+}\Delta_\chi(\xi,\lambda)\exp((\chi,x)),
$$
with $Q_A^+$ being the $\mathbb{Z}_+$-span of the simple roots $e_i-e_{i+1}$, $i=1,\ldots,N-1$, and the coefficients $\Delta_\chi$ satisfying, in particular, the following properties: $\Delta_0\equiv 1$ and, for each $x_0\in -G_N$, there exists $K_{x_0}>0$ such that
$$
|\Delta_\chi(\xi,\lambda)| \leq K_{x_0}\exp((\chi,x_0)),\ \ \ \chi\in Q_A^+,\ \ \xi\in \ii\mathbb{R}^N.
$$
Choosing $x_0\in -G_N$ such that $0<m_N(x_0)<\delta/2$, we have
$$
m_N(x_0+x)\leq m_N(x_0)+m_N(x) < -\delta/2
$$
whenever $m_N(x)<-\delta$. From these two bounds, it is now a straightforward exercise to deduce that
\begin{equation*}
\sum_{\chi\in Q_A^+\setminus\{\underline{0}\}}|\Delta_\chi(\xi,k)|\exp(\langle \chi,x))\leq C_\delta\exp(m_N(x))
\end{equation*}
for $m_N(x)<-\delta$, and where $C_\delta>0$ can be chosen to depend continuously on $\delta$, from which the proposition clearly follows.
\end{proof}

\section{Integral equation}
\label{Sec:IntEq}
We turn now to the proof of Thm.~\ref{Thm:IntEq}. In order to easily make use of results discussed in Section \ref{Sec:JEigFuncs}, we substitute \eqref{PsiN} in \eqref{IntEq}, remove the overall factor $W_N(g;x)^{1/2}$, switch to the dimensionless quantities \eqref{ulam}--\eqref{t} and
\begin{equation}
\label{xi}
\xi\equiv z/\hbar\mu.
\end{equation}
We are thus led to consider the integrand
\begin{equation}
\label{Iz}
I_\xi(u,\lambda;t,s)\equiv  \exp\left(\ii \xi\sum_{i=1}^N(t_i-s_i)\right)
K_N(\lambda;t,s) F_N(u,\lambda;s) W_N(\lambda;s),
\end{equation}
with kernel function
\begin{equation}
\label{KN}
K_N(\lambda;t,s)\equiv \prod_{i,j=1}^N\left\lbrack 2\cosh\frac{t_i-s_j}{2}\right\rbrack^{-\lambda}
\end{equation}
and weight function
\begin{equation}
\label{WN}
W_N(\lambda;s)\equiv \cW_N\big(\hbar\lambda;\mu^{-1}s\big) = \prod_{1\leq i<j\leq N}\left[4\sinh^2\frac{s_i-s_j}{2}\right]^\lambda.
\end{equation}

Given $(u,\lambda)\in\mathbb{R}^N\times(0,\infty)$, the extended hypergeometric function $F_N$ is known to satisfy the bound
\begin{equation}
\label{FBd2}
|F_N(u,\lambda;s)|\leq C\exp((\rho,s))\prod_{1\leq i<j\leq N}(1+s_j-s_i),\ \ \ s\in G_N,
\end{equation}
for some constant $C>0$; see Cor.~7 in \cite{Saw08}, Cor.~3.1 \& Thm.~3.1 in \cite{Sch08} or Thm.~6.3 in \cite{HR15}.
Rewriting the weight function according to
\begin{equation}
\label{WNExpr}
W_N(s) = e^{-2(\rho,s)}\prod_{1\leq i<j\leq N}(1-e^{s_i-s_j})^{2\lambda}
\end{equation}
and using, in addition, the elementary estimate
\begin{equation}
\label{coshEs}
\left|\cosh\frac{w}{2}\right|^{-\lambda}\leq C(\im w)\exp\left(-\frac{\lambda}{2}|\re w|\right),\ \ \ |\im w| < \pi,
\end{equation}
where $C$ is a continuous function on $(-\pi,\pi)$, as well as the fact that $|\rho_i|\leq \lambda(N-1)/2$, we thus infer the bound
\begin{equation}
\label{IBd}
|I_z(u,\lambda;t,s)|\leq C(\re t,\im t)\exp\left(-\frac{\lambda}{2}||s||_1\right)\prod_{1\leq i<j\leq N}(1+s_j-s_i),\ \ \ s\in G_N,
\end{equation}
with $C$ a continuous function on $\mathbb{R}^N\times (-\pi,\pi)^N$, and where $||s||_1\equiv |s_1|+\cdots+|s_N|$. Since the singularities of the kernel function $K_N(t,s)$ are located at
$$
t_i = s_j\pm\frac{\ii\pi}{2}(2n+1),\ \ \ i,j = 1,\ldots,N,\ \ n\in\mathbb{N},
$$
it follows that the function
\begin{equation}
\label{scrF}
\mathscr{F}_\xi(u,\lambda;t)\equiv \int_{G_N}I_\xi(u,\lambda;t,s)ds,\ \ \ (u,\lambda,t)\in\mathbb{R}^N\times(0,\infty)\times\mathbb{R}^N,
\end{equation}
is well defined and extends to a holomorphic function of $t$ for $|\im t_i|<\pi$, $i=1,\ldots,N$. (Indeed, using Cauchy's integral formula and the uniform bound \eqref{IBd}, it is readily seen that we may differentiate in $t$ under the integral sign.)

From the kernel identities \eqref{QzIds}, we now infer that $\mathscr{F}_\xi(u,\lambda;t)$ is a solution to the system of (extended) hypergeometric PDEs \eqref{hyperGSys}. It is at this point that we need to assume that $\lambda\geq 1$, or equivalently that $g\geq \hbar$.

\begin{lemma}
Let $\lambda\geq 1$ and $u\in\mathbb{R}^N$. For $t\in\mathbb{C}^N$ with $|\im t_i|<\pi$, $i=1,\ldots,N$, we have the joint eigenfunction property
\begin{equation}
\label{scrFEigProp}
D_r(t)\mathscr{F}_\xi(u;t) = S_r(u)\mathscr{F}_\xi(u;t),\ \ \ r = 1,\ldots,N.
\end{equation}
\end{lemma}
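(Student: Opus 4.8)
The plan is to bring the operator $D_r(t)$ inside the integral defining $\mathscr{F}_\xi$ in \eqref{scrF}, use the kernel identities \eqref{QzIds} to replace it by a differential operator acting on the integration variable $s$, integrate by parts so that this operator ends up acting on $F_N$, and then read off the eigenvalue from the hypergeometric system \eqref{hyperGSys}.

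More precisely, I would first transcribe the kernel identities \eqref{QzIds} to the radial gauge and the dimensionless variables \eqref{ulam}--\eqref{t}. Writing $\widetilde{Q}(\lambda;t,s)\equiv\exp\bigl(\ii\xi\sum_i(t_i-s_i)\bigr)K_N(\lambda;t,s)$ for the kernel with its weight factors stripped off, so that the integrand \eqref{Iz} reads $I_\xi=\widetilde{Q}\cdot F_NW_N$, one conjugates \eqref{QzIds} by $\cW_N(g;x)^{1/2}$ and $\cW_N(g;y)^{1/2}$ and rescales via \eqref{ulam}--\eqref{t}; using the definition \eqref{Dr} of the radial operators, the result is
$$
D_r(\lambda;t)\,\widetilde{Q}(\lambda;t,s) = D_r(\lambda;s)^{\dagger}\,\widetilde{Q}(\lambda;t,s),\ \ \ r=1,\ldots,N,
$$
where $D_r(\lambda;s)^{\dagger}$ is the formal transpose of $D_r(\lambda;s)$ with respect to the measure $W_N(\lambda;s)\,ds$. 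That the transpose is what appears here is a consequence of the manifest formal self-adjointness of $H_r$: since the potential is real, the formal (Lebesgue) transpose of $H_r$ coincides with $H_r$ with all momenta reversed, which is precisely the operator $H_r(-y)$ occurring in \eqref{QzIds}, and conjugating by $\cW_N^{1/2}$ turns this into the $W_N(\lambda;s)\,ds$-transpose of $D_r$.

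With this identity in hand I would argue first for real $t\in G_N$. Differentiating under the integral sign --- legitimate by Cauchy's formula and the uniform bound \eqref{IBd}, as noted just before the lemma --- and then using the displayed identity, we get
$$
D_r(t)\mathscr{F}_\xi(u;t) = \int_{G_N}\bigl(D_r(t)\widetilde{Q}\bigr)F_NW_N\,ds = \int_{G_N}\bigl(D_r(s)^{\dagger}\widetilde{Q}\bigr)F_NW_N\,ds.
$$
Integrating by parts, that is, using that $D_r(s)^{\dagger}$ is the $W_N(\lambda;s)\,ds$-transpose of $D_r(s)$, converts the right-hand side into $\int_{G_N}\widetilde{Q}\,\bigl(D_r(s)F_N\bigr)W_N\,ds$; the hypergeometric system \eqref{hyperGSys} gives $D_r(s)F_N=S_r(u)F_N$, and the remaining integral is $\mathscr{F}_\xi(u;t)$, which is \eqref{scrFEigProp} for $t\in G_N$. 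To reach complex $t$ with $|\im t_i|<\pi$ I would appeal to analytic continuation: $\mathscr{F}_\xi$ is holomorphic there by the discussion preceding the lemma, and the $S_N$-invariant operators $D_r$ map the space of $S_N$-symmetric functions holomorphic on this strip into itself (their singularities along the complexified walls being apparent), so both sides of \eqref{scrFEigProp} are holomorphic in $t$ on the strip and agree on $G_N$.

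The main obstacle --- and the point, flagged just above, at which the hypothesis $\lambda\geq1$ is required --- is the justification of this integration by parts, carried out by exhausting $G_N$ with compact subsets bounded away from the walls and passing to the limit: one must check that no boundary contributions survive at $\partial G_N$ or at infinity and that all the intermediate integrals converge. At infinity this follows from the exponential decay in \eqref{IBd}, together with the corresponding bounds on the derivatives of $F_N$, which are read off from the asymptotic expansion behind Proposition~\ref{Prop:FAs} (and from \eqref{FBd2}). Near a wall $s_i=s_j$ one uses that $W_N(\lambda;s)$ vanishes there to order $2\lambda\geq2$, which against the singularities of the coefficients of $D_r(s)$ --- built out of $\coth$ and $\sinh^{-2}$ factors --- is exactly what is needed, for $\lambda\geq1$, to kill the boundary terms and keep the intermediate integrals finite (the borderline value $\lambda=1$ being special, as the potential's coefficient $2g(g-\hbar)$ then vanishes). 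The kernel identity is useful once more here, since $D_r(s)^{\dagger}\widetilde{Q}=D_r(t)\widetilde{Q}$ is manifestly regular in $s$ across the walls. Making these estimates precise, uniformly in $r=1,\ldots,N$, is where the real work lies.
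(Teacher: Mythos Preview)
Your strategy---apply the kernel identity, integrate by parts, and invoke the eigenvalue property of $F_N$---is exactly the paper's. Two differences are worth noting.

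First, the paper carries out the computation in the $\Psi_N$-gauge (with $H_r$, $Q_z$, $\Psi_N$ and Lebesgue measure) rather than in your radial gauge with the weighted transpose $D_r(s)^\dagger$; this is cosmetic.

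Second, and more substantively, the paper does \emph{not} attempt the wall analysis for $\lambda\geq 1$ that you identify as ``where the real work lies''. Instead it observes that $\mathscr{F}_\xi(u,\lambda;t)$ is analytic in $\lambda$ for $\re\lambda>1$ (by the extension of the bound \eqref{FBd2} to complex $\lambda$ established in \cite{HR15}), so that it suffices to prove \eqref{scrFEigProp} for, say, $\lambda>2$ and then analytically continue. For $\lambda>2$ the factor $\cW_N(g;y)^{1/2}$ lies in $C^2(\mathbb{R}^N)$ and the product $\cW_N^{1/2}H_r(-y)\cW_N^{1/2}$ is continuous on $\mathbb{R}^N$; combined with the smoothness of the remaining factors in $Q_z$ and $\Psi_N$ and the formal self-adjointness of $H_r$, the integration by parts over $G_N$ is then immediate, with no delicate boundary bookkeeping required. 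Your direct route should also work, but the analytic-continuation trick buys a much shorter argument by moving to a regime where the regularity issues at the walls simply disappear.
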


\begin{proof}
If we substitute $\rho(\re\lambda)$ for $\rho$ in \eqref{FBd2}, Thm.~6.3 in \cite{HR15} implies that the resulting bound holds true as long as $\re\lambda>1$. It follows that $\mathscr{F}_\xi(u,\lambda;t)$ is analytic in $\lambda$ for $\re\lambda>1$, so that it suffices to prove the lemma for $\lambda>2$, say.

Using \eqref{PsiN}, \eqref{Qz}, \eqref{Dr}, \eqref{Iz} and \eqref{scrF}, we rewrite the left-hand side of \eqref{scrFEigProp} as
$$
(\hbar\mu)^{-r}W_N(\lambda,t)^{-1/2}\int_{G_N}H_r(g,x)Q_z(g;x,y)\Psi_N(p,g;y)dy
$$
cf.~\eqref{ulam}--\eqref{t}. Invoking the kernel identity \eqref{QzIds}, we obtain
$$
(\hbar\mu)^{-r}W_N(\lambda,t)^{-1/2}\int_{G_N}\Psi_N(p,g;y)H_r(g,-y)Q_z(g;x,y)dy.
$$
We note that $\lambda>2$, or equivalently $g>2\hbar$, ensures that $W(y)^{1/2}H_r(-y)W(y)^{1/2}$ is contained in $C(\mathbb{R}^N)$. Indeed, $\cW_N(g,y)^{1/2}$ is clearly in $C^2(\mathbb{R}^N)$, each of its factors is differentiated at most twice and each pole of $u(y_i-y_j)$, $1\leq i\neq j\leq N$, is matched by a corresponding zero (of order $>2$) of $\cW_N(g,y)^{1/2}$. Since $H_r(g,-y)$ is manifestly self-adjoint and the remaining factors in both $\Psi_N(p,g;y)$ and $Q_z(g;x,y)$ are smooth, we can thus use integration by parts to get
$$
(\hbar\mu)^{-r}W_N(\lambda,t)^{-1/2}\int_{G_N}Q_z(g;x,y)H_r(g,y)\Psi_N(p,g;y)dy,
$$
where $-y$ has been replaced by $y$ due to the absence of complex conjugation. Finally, appealing to the eigenvalue property $H_r(y)\Psi_N(p;y)=S_r(p)\Psi_N$, substituting \eqref{PsiN} and \eqref{Qz} and reverting back to the dimensionless quantities \eqref{ulam}--\eqref{t}, we arrive at the right-hand side of \eqref{scrFEigProp}.
\end{proof}

Since $\mathscr{F}_\xi(u,\lambda;t)$ is manifestly symmetric in $t$ and the hypergeometric function $F_N(u,\lambda;t)$ spans the space of symmetric solutions to \eqref{hyperGSys} that are analytic at the origin, we can thus conclude that
\begin{equation}
\label{scrFzFN}
\mathscr{F}_\xi(u,\lambda;t) = \mu_\xi(u,\lambda)F_N(u,\lambda;t)
\end{equation}
for some function $\mu_\xi$, which remains to be determined. To this end, we proceed to compute the dominant asymptotics of $\mathscr{F}_\xi(u,\lambda;t)$ for $m_N(t)\to -\infty$, which, when compared with \eqref{Fas}, will yield an explicit expression for $\mu_\xi(u,\lambda)$.

Using the expression \eqref{WNExpr} for $W_N$ and employing the bound \eqref{FBd}, a simple telescoping argument yields the estimate
\begin{equation}
\label{WFEst}
W_N(s)F_N(u;s) = e^{-(\rho,s)}\Big[e^{-(\rho,s)}F_N^{\mathrm{as}}(u,s)+O\big(e^{m_N(s)}\big)\Big],\ \ \ m_N(s)\to-\infty.
\end{equation}
If we introduce the function
\begin{equation*}
\begin{split}
\widetilde{K}_N(t,s) &\equiv \prod_{1\leq i\neq j\leq N}\left\lbrack 2\cosh\frac{t_i-s_j}{2}\right\rbrack^{-\lambda}\\
&= \frac{e^{(\rho,t+s)}}{\prod_{1\leq i<j\leq N}(1+e^{s_i-t_j})^\lambda(1+e^{t_i-s_j})^\lambda}
\end{split}
\end{equation*}
and take $\alpha\in(0,1)$, we have
\begin{equation}
\label{KtEst}
\widetilde{K}_N(t,s) = e^{(\rho,t+s)}\lbrack 1+R_N(t,s)\rbrack,
\end{equation}
with the remainder $R_N$ satisfying the bound
\begin{equation}
\label{RNBd}
|R_N(t,s)| < Ce^{(1-\alpha)m_N(t)},
\end{equation}
for all $t,s\in G_N$ such that $\max_{i=1,\ldots,N}|s_i-t_i|<-\alpha m_N(t)$, where $C>0$ is independent of $\alpha$. Since the remaining factors in $K_N(t,s)$ decay exponentially as $|s_i-t_i|\to\infty$, this state of affairs suggests that the dominant $m_N(t)\to-\infty$ asymptotics of $\mathscr{F}_\xi(u;t)$ is obtained by performing the substitutions $W(s)F_N(u;s)\to e^{-2(\rho,s)}F_N^{\mathrm{as}}(u;s)$ and $K_N(t,s)\to e^{(\rho,t+s)}\prod_{i=1}^N[2\cosh\frac{t_i-s_i}{2}\big]^{-\lambda}$ in $I_\xi$. In the following lemma, we make this suggestion precise.

\begin{lemma}
For $\lambda>0$, we have
\begin{multline}
\label{scrFAs}
\mathscr{F}_\xi(u,\lambda;t)
= e^{(\rho,t)+\ii\xi\sum_{i=1}^Nt_i}\\
\cdot \Bigg(\int_{\mathbb{R}^N}\frac{e^{-(\rho,s)-\ii\xi\sum_{i=1}^Ns_i}F_N^{\mathrm{as}}(u,\lambda;s)}{\prod_{i=1}^N\left[2\cosh\frac{t_i-s_i}{2}\right]^{\lambda}}ds+O\big(e^{rm_N(t)/2}\big)\Bigg),
\end{multline}
as $m_N(t)\to-\infty$, with decay rate
$$
r = \min(1,\lambda/4).
$$
\end{lemma}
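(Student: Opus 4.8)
\emph{Proof idea.} The plan is to localise the $s$-integral defining $\mathscr{F}_\xi$ near the diagonal $s=t$, where, after writing $K_N(t,s)=\widetilde{K}_N(t,s)\prod_i[2\cosh\tfrac{t_i-s_i}{2}]^{-\lambda}$, the integrand $I_\xi$ is well approximated by the \emph{model integrand}
$$
M_\xi(t,s):=e^{(\rho,t-s)+\ii\xi\sum_i(t_i-s_i)}\,F_N^{\mathrm{as}}(u,\lambda;s)\prod_{i=1}^N\Big[2\cosh\tfrac{t_i-s_i}{2}\Big]^{-\lambda},
$$
whose integral over all of $\mathbb{R}^N$ is exactly the leading term in \eqref{scrFAs}. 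To see this, substitute $s\to t+s$ and insert the sum \eqref{Fas} for $F_N^{\mathrm{as}}$: the resulting $s$-integral factorises over the coordinates and, since $\prod_i\int_{\mathbb{R}}e^{\ii(u_{\sigma^{-1}(i)}-\xi)w}[2\cosh(w/2)]^{-\lambda}dw$ does not depend on $\sigma\in S_N$, one gets $\int_{\mathbb{R}^N}M_\xi(t,s)\,ds=\big(\prod_{k=1}^N\hat g(u_k-\xi)\big)F_N^{\mathrm{as}}(u,\lambda;t)$ with $\hat g(y)=\int_{\mathbb{R}}e^{\ii yw}[2\cosh(w/2)]^{-\lambda}dw$ a convergent Fourier integral (the one that later produces the $\Gamma$-factors in \eqref{phi}).

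Fix $\alpha\in(0,1/2)$ and, for $m_N(t)$ sufficiently negative, set $\Omega:=\{s\in G_N:\max_i|s_i-t_i|<-\alpha m_N(t)\}$, so that $m_N(s)<(1-2\alpha)m_N(t)$ on $\Omega$. I would then decompose
$$
\mathscr{F}_\xi(u,\lambda;t)-\int_{\mathbb{R}^N}M_\xi(t,s)\,ds=\int_\Omega\big(I_\xi-M_\xi\big)+\int_{G_N\setminus\Omega}I_\xi-\int_{\mathbb{R}^N\setminus\Omega}M_\xi
$$
and estimate the three terms. On $\Omega$, use \eqref{KtEst}--\eqref{RNBd} to replace $\widetilde{K}_N$ by $e^{(\rho,t+s)}$ up to a factor $1+O(e^{(1-\alpha)m_N(t)})$ and \eqref{WFEst} to replace $W_NF_N$ by $e^{-2(\rho,s)}F_N^{\mathrm{as}}$ up to $e^{-(\rho,s)}O(e^{(1-2\alpha)m_N(t)})$; multiplying these expansions out, every term of $I_\xi-M_\xi$ is bounded by $e^{(\rho,t)}e^{(1-2\alpha)m_N(t)}\prod_i[2\cosh\tfrac{t_i-s_i}{2}]^{-\lambda}$ (here one uses that $e^{-(\rho,s)}F_N^{\mathrm{as}}(u,\lambda;s)$ is bounded for $u\in\mathbb{R}^N$), whence $\int_\Omega|I_\xi-M_\xi|\lesssim e^{(\rho,t)}e^{(1-2\alpha)m_N(t)}$. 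For the model tail, $|M_\xi(t,s)|\lesssim e^{(\rho,t)}\prod_i[2\cosh\tfrac{t_i-s_i}{2}]^{-\lambda}$ together with $[2\cosh(w/2)]^{-\lambda}\le e^{-\lambda|w|/2}$ give $\int_{\mathbb{R}^N\setminus\Omega}|M_\xi|\lesssim e^{(\rho,t)}\big(e^{\lambda\alpha m_N(t)/2}+e^{\lambda m_N(t)/4}\big)$: the contribution of $(\mathbb{R}^N\setminus\Omega)\cap G_N$ comes from $\max_i|s_i-t_i|\ge-\alpha m_N(t)$, while for $s\notin G_N$ some adjacent pair is inverted and, every gap of $t$ exceeding $-m_N(t)$, this forces $\sum_i|s_i-t_i|\ge-m_N(t)$.

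The remaining term $\int_{G_N\setminus\Omega}I_\xi$ is the main obstacle. Bounding $\widetilde{K}_N(t,s)\le e^{(\rho,t+s)}$, each diagonal factor of $K_N$ by $e^{-\lambda|t_i-s_i|/2}$, and using \eqref{FBd2}--\eqref{WNExpr} for $W_NF_N$ (the $(\rho,s)$-dependence cancels) yields the pointwise bound $|I_\xi(t,s)|\lesssim e^{(\rho,t)}e^{-\frac{\lambda}{2}\sum_i|s_i-t_i|}\prod_{i<j}(1+s_j-s_i)$. Where $m_N(s)<-\delta$ one may drop the polynomial, since \eqref{WFEst} then gives $|W_NF_N|\lesssim e^{-(\rho,s)}$, so $|I_\xi|\lesssim e^{(\rho,t)}e^{-\frac\lambda2\sum_i|s_i-t_i|}$, which integrated over $\max_i|s_i-t_i|\ge-\alpha m_N(t)$ is $\lesssim e^{(\rho,t)}e^{\lambda\alpha m_N(t)/2}$. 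The delicate part is the thin region $m_N(s)\ge-\delta$, where \eqref{FBd2} must be retained; here one exploits that such $s$ again lies far from $t$ ($\sum_i|s_i-t_i|\gtrsim-m_N(t)$), so that splitting off a quarter of the exponent to absorb the polynomial bounds this piece by $e^{(\rho,t)}e^{\lambda m_N(t)/4}$ up to factors subexponential in $t$; I expect this to be the step requiring the most care. (A clean way to bypass a refined estimate of this tail: once the leading term has been read off as above, compare with the already established \eqref{scrFzFN} and the uniform bound \eqref{FBd}---evaluating along a generic ray $t=Mv$, $v\in G_N$, $M\to\infty$, where the polynomial in \eqref{FBd2} is harmless, identifies $\mu_\xi(u,\lambda)=\prod_k\hat g(u_k-\xi)$, and then $\mathscr{F}_\xi=\mu_\xi F_N=\mu_\xi F_N^{\mathrm{as}}+O(e^{(\rho,t)+m_N(t)})$ gives \eqref{scrFAs}.)

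Assembling the three bounds and taking $\alpha=1/4$, every error term is $O\big(e^{(\rho,t)}e^{rm_N(t)/2}\big)$ with $r=\min(1,\lambda/4)$, which is precisely \eqref{scrFAs}.
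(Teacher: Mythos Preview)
Your approach is essentially the paper's: localise to the box $\{|s_i-t_i|<-m_N(t)/4\}$ (the paper's $D_N(t)$, your $\Omega$ with $\alpha=1/4$), replace $I_\xi$ there by the model integrand via \eqref{WFEst} and \eqref{KtEst}--\eqref{RNBd}, and then extend the model integral from the box to all of $\mathbb{R}^N$ using boundedness of $e^{-(\rho,s)}F_N^{\mathrm{as}}$ together with the exponential tail of $[2\cosh(w/2)]^{-\lambda}$. You are in fact more explicit than the paper about the outer piece $\int_{G_N\setminus\Omega}I_\xi$, which the paper subsumes under ``readily follows''; your proposed bypass via \eqref{scrFzFN} is logically available (that relation is established \emph{before} the present lemma) and suffices for the only use made of the lemma---identifying $\mu_\xi$---though it would not by itself deliver the lemma's stated uniform remainder $O(e^{rm_N(t)/2})$.
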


\begin{proof}
For $t\in G_N$, we consider the domain
$$
D_N(t)\equiv \left\{s\in\mathbb{R}^N\mid |s_i-t_i|<-m_N(t)/4,\, i=1,\ldots,N\right\}\subset G_N.
$$
Note that $m_N(s)<m_N(t)/2<0$ whenever $t\in G_N$ and $s\in D_N(t)$. From \eqref{WFEst} and \eqref{KtEst}--\eqref{RNBd}, we can thus infer
\begin{multline*}
\int_{D_N(t)}I_\xi(u;t,s)ds\\
= e^{(\rho,t)+\ii\xi\sum_{i=1}^Nt_i}\Bigg(\int_{D_N(t)}\frac{e^{-(\rho,s)-\ii\xi\sum_{i=1}^Ns_i}F_N^{\mathrm{as}}(u;s)}{\prod_{i=1}^N\left[2\cosh\frac{t_i-s_i}{2}\right]^\lambda} dy+O\big(e^{m_N(t)/2}\big)\Bigg)
\end{multline*}
as $m_N(t)\to-\infty$.

As demonstrated after the lemma, it is desirable to arrive at the integral in the right-hand side over all of $\mathbb{R}^N$, since it can be evaluated explicitly. To this end, we observe that the function $e^{-(\rho,s)}F_N^{\mathrm{as}}(u;s)$ is bounded for $s\in\mathbb{R}^N$. Moreover, if $s\in\mathbb{R}^N\setminus D_N(t)$, then $|s_i-t_i|\geq -m_N(t)/4$ for at least one $i=1,\ldots,N$ and, by the elementary estimate \eqref{coshEs}, we clearly have
\begin{equation*}
\int_{t_i\mp m_N(t)/4}^{\pm\infty} \frac{ds_i}{\left[2\cosh\frac{t_i-s_i}{2}\right]^{\lambda}} = O\big(e^{\lambda m_N(t)/8}\big)
\end{equation*}
as $m_N(t)\to-\infty$. From these observations the lemma readily follows.
\end{proof}

At this point we can invoke the Fourier transform formula
\begin{equation}
\int_{\mathbb{R}}\frac{e^{ivw}}{\left\lbrack 2\cosh\frac{w}{2}\right\rbrack^\lambda}dw = \frac{\Gamma(\lambda/2+iv)\Gamma(\lambda/2-iv)}{\Gamma(\lambda)},
\end{equation}
which is easily inferred from a standard integral representation for the Beta function, cf.~Eq.~(5.12.7) in \cite{Dig10}. Indeed, when combined with \eqref{Fas}, it yields
\begin{multline*}
\int_{\mathbb{R}^N}\frac{e^{-(\rho,s)-\ii\xi\sum_{i=1}^Ns_i}F_N^{\mathrm{as}}(u,\lambda;s)}{\prod_{i=1}^N\left[2\cosh\frac{t_i-s_i}{2}\right]^{\lambda}}ds\\
= \sum_{\sigma\in S_N}c(-i\sigma u,\lambda)\prod_{i=1}^N \int_{\mathbb{R}}\frac{e^{\ii\lbrack(\sigma u)_i-\xi\rbrack s_i}}{\left[2\cosh\frac{t_i-s_i}{2}\right]^{\lambda}}ds_i\\
= e^{-(\rho,t)-\ii\xi\sum_{i=1}^Nt_i}F_N^{\mathrm{as}}(u,\lambda;t)\prod_{i=1}^N \frac{\Gamma(\lambda/2+\ii (u_i-z))\Gamma(\lambda/2-\ii (u_i-z))}{\Gamma(\lambda)}.
\end{multline*}
Substituting this expression in \eqref{scrFAs}, it becomes clear from Prop.~\ref{Prop:FAs} that the function $\mu_\xi$ in \eqref{scrFzFN} is given by
$$
\mu_\xi(u,\lambda) = \prod_{i=1}^N \frac{\Gamma(\lambda/2+\ii (u_i-z))\Gamma(\lambda/2-\ii (u_i-z))}{\Gamma(\lambda)}.
$$

Multiplying \eqref{scrFzFN} by $W_N(\lambda;t)^{1/2}$, substituting the above expression for $\mu_\xi$ and rewriting the resulting equation in terms of the momenta $p$ and $z$, coupling constant $g$ and positions $x$ and $y$ (cf.~\eqref{ulam}--\eqref{t} and \eqref{xi}), we arrive at the integral equation for the joint eigenfunctions $\Psi_N$ given by \eqref{IntEq}--\eqref{phi}. This concludes the proof of Thm.~\ref{Thm:IntEq}.

\section{$Q$-operator}
\label{Sec:QOp}
Introducing the renormalised joint eigenfunctions (cf.~\eqref{PsiN})
\begin{multline*}
\widehat{\Psi}_N((p_1,\ldots,p_N),g;(x_1,\ldots,x_N))\\
\equiv \widehat{\cW}_N(g;p)^{1/2}\cdot F_N((p_1/\hbar\mu,\ldots,p_N/\hbar\mu),g/\hbar;(\mu x_1,\ldots,\mu x_N))\cdot \cW_N(g;x)^{1/2},
\end{multline*}
where
$$
\widehat{\cW}_N(g;p)\equiv 1\big/\widehat{C}(g;p)\widehat{C}(g;p),\ \ \ \widehat{C}(g;p)\equiv \prod_{1\leq i<j\leq N}\frac{\Gamma(\ii (p_i-p_j)/\hbar\mu)}{\Gamma(g/\hbar+\ii (p_i-p_j)/\hbar\mu)},
$$
we continue with the proof of Thm.~\ref{Thm:QOp}. Specifically, generalising the treatment of the $N=2$ case in \cite{HR18}, we shall make the Hilbert space properties of the $Q$-operator $\mathcal{Q}_z$ plain by using the fact that the generalised Fourier transform
$$
\mathcal{F}_N(g): C_0^\infty(G_N)\subset L^2(G_N)\to L^2(G_N),\ \ \ g\geq 0,
$$
defined by
$$
(\mathcal{F}_N(g)f)(x)\equiv \frac{1}{h^{N/2}}\int_{G_N}\widehat{\Psi}_N(p,g;x)f(p)dp,
$$
extends to a unitary operator on $L^2(G_N)$. The validity of this claim is readily inferred from the product structure \eqref{FNF} of the extended hypergeometric function $F_N$ and the analogous result for the $A_{N-1}$-instance of the hypergeometric Fourier transform, first introduced and developed by Opdam \cite{Opd95} as well as Cherednik \cite{Che97}; see also e.g.~\cite{HO21}.

Multiplying \eqref{IntEq} by $\widehat{\cW}_N(g;p)^{1/2}f(p)$, $f\in C_0^\infty(G_N)$, and integrating over $p\in G_N$, we may change the order of integration in the left-hand side to obtain
$$
\mathcal{Q}_z(\mathcal{F}_N(f)) = \mathcal{F}_N(\phi_zf).
$$
From the well-known Gamma function properties (cf.~Chapter 5 in \cite{Dig10})
$$
\overline{\Gamma(z)} = \Gamma(\bar{z}),\ \ \ |\Gamma(s+it)|\leq |\Gamma(s)|,
$$
it is clear that that the eigenvalue $\phi_z(p)$ is a real-valued bounded function for $p\in\mathbb{R}^N$. Since $C_0^\infty(G_N)$ is dense in $L^2(G_N)$, boundedness and self-adjointness of $\mathcal{Q}_z$ as well as \eqref{QzQwCom} clearly follow.

For $r=1,\ldots,N$, the multiplication operator $f\mapsto S_rf$ is self-adjoint when equipped with the domain
$$
\mathrm{Dom}(S_r)\equiv \{f\in L^2(G_N)\mid S_rf\in L^2(G_N)\}.
$$
Each formal PDO $H_r$ can thus be promoted to a self-adjoint operator $\mathcal{F}S_r\mathcal{F}^*$, whose domain $\mathrm{Dom}(H_r)=\mathcal{F}(\mathrm{Dom}(S_r))$. Since $\phi_z$ is bounded, it is clear that $\mathcal{Q}_z\mathrm{Dom}(H_r)\subset \mathrm{Dom}(H_r)$ and that $Q_zH_rf=H_rQ_zf$ for each $f\in \mathrm{Dom}(H_r)$, which amounts to \eqref{QzHrCom}.

Finally, the difference equation \eqref{phizEq} is straightforward to establish by a direct computation using the Gamma function recurrence $\Gamma(z+1)=z\Gamma(z)$.

\bibliographystyle{amsalpha}

\end{document}